\documentclass[journal]{IEEEtran}
 
 \usepackage[T1]{fontenc}
 
 \usepackage{amsmath}
 \usepackage{amsthm}
 \usepackage{bm}
 \usepackage{mathtools}
 \usepackage{nccmath}
 \usepackage[cmintegrals]{newtxmath}
 
 \usepackage{soul}
 \usepackage{xcolor}
 \usepackage{xspace}
 
 \theoremstyle{plain}
 \newtheorem{theorem}{Theorem}

 \theoremstyle{definition}

 \usepackage{graphicx}
 \usepackage{stfloats}
 \usepackage{float}
 
 \usepackage{algorithm}
 \usepackage{algpseudocode}
 
 \usepackage{tabularx}
 \usepackage[inline]{enumitem}
 
 \usepackage[sort,compress]{cite}
 
 \usepackage{dsfont}
 \usepackage{cases}
 \usepackage{setspace}
 \usepackage[framemethod=tikz]{mdframed}
 
 \usepackage[user]{zref}
 \usepackage[hidelinks]{hyperref}
 \usepackage{balance}
 
 \newcounter{revc}
 
 \makeatletter
 
 \zref@newprop{revsec}{}
 \zref@newprop{revcontent}{}
 
 \zref@addprop{main}{revsec}
 \zref@addprop{main}{revcontent}
 
 \newcommand{\revinu}[2]{%
 	\zref@setcurrent{revsec}{\thesection}%
 	\zref@setcurrent{revcontent}{#2}%
 	\refstepcounter{revc}%
 	\zlabel{#1}%
 	\label{#1}%
 	#2%
 }
 
 \newcommand{\revr}[2]{%
 	\zref@setcurrent{revsec}{\thesection}%
 	\zref@setcurrent{revcontent}{#2}%
 	\refstepcounter{revc}%
 	\zlabel{#1}%
 	\label{#1}%
 	\st{#2}%
 }
 
 \usepackage[caption=false,font=footnotesize]{subfig}
 \makeatother
 
 \definecolor{mycolor}{rgb}{0.122,0.435,0.698}
 
 \newmdenv[
 innerlinewidth=0.5pt,
 roundcorner=4pt,
 linecolor=mycolor,
 innerleftmargin=6pt,
 innerrightmargin=6pt,
 innertopmargin=6pt,
 innerbottommargin=6pt
 ]{mybox}
 
 \newcommand{\bmP}{\boldsymbol{\Psi}}

 \newcommand{\bmV}{\mathbf{V}}

 \def\BibTeX{%
 	{\rm B\kern-.05em%
 		{\sc i\kern-.025em b}\kern-.08em%
 		T\kern-.1667em%
 		\lower.7ex\hbox{E}\kern-.125emX}%
 }
\begin{document}

\title{Stacked Intelligent Metasurfaces Assisted UAV Communications}  
\author{Chandan~Kumar~Sheemar,~\IEEEmembership{Member,~IEEE}, Giovanni Iacovelli,~\IEEEmembership{Member,~IEEE}, Sourabh Solanki,~\IEEEmembership{Member,~IEEE},  \\ Wali Ullah Khan,~\IEEEmembership{Member,~IEEE}, Symeon Chatzinotas,~\IEEEmembership{Fellow,~IEEE}   
 \thanks{C. K. Sheemar, G. Iacovelli, W. U. Khan, and S. Chatzinotas are with the SnT, University of Luxembourg, (emails:\{chandankumar.sheemar,\hspace{0pt}giovanni.iacovelli,\hspace{0pt}waliullah.khan,\hspace{0pt}symeon.chatzinotas\}@uni.lu). Sourabh Solanki is with National Institute of Technology Warangal, TS, 506004, India (e-mail: \{ssolanki\}@nitw.ac.in).}
} 
 \maketitle
 
\begin{abstract}
In this paper, we investigate an unmanned aerial vehicle (UAV) communication system assisted by stacked intelligent metasurfaces (SIMs), which enable programmable wave-domain signal processing through multiple cascaded metasurface layers. By shifting part of the beamforming functionality from the RF/digital domain to the electromagnetic domain, SIMs allow the realization of energy-efficient hybrid beamforming architectures suitable for aerial platforms. We formulate the joint design of digital precoding, SIM phase configuration, and UAV positioning for multi-user downlink sum-rate maximization. To solve the resulting non-convex problem, we develop an alternating optimization framework that guarantees monotonic improvement of the objective. Numerical results demonstrate that the proposed SIM-assisted architecture significantly improves spectral efficiency while maintaining low hardware complexity, and highlight the impact of the number of metasurface layers and size of each layer on system performance.
\end{abstract}
\begin{IEEEkeywords}
UAV, SIM, wave-domain signal processing
\end{IEEEkeywords}

\IEEEpeerreviewmaketitle

\section{Introduction} \label{Intro}
 
 \IEEEPARstart{U}{nmanned aerial vehicles} (UAVs) are a promising enabler for next-generation wireless networks due to their flexible deployment, strong line-of-sight (LoS) connectivity, and rapid coverage capabilities \cite{sheemar2026joint}. However, supporting multi-user communications with large antenna arrays is challenging, as fully digital beamforming requires many radio frequency (RF) chains and high power consumption \cite{sheemar2025near}, which conflicts with UAV energy and payload constraints. 

Recently, stacked intelligent metasurfaces (SIMs) have been proposed as an energy-efficient architecture for realizing wave-domain signal processing through multiple programmable metasurface layers \cite{sheemar2026survey}. By manipulating the phase of incident electromagnetic waves across cascaded layers, SIMs can perform analog beamforming and spatial filtering directly in the propagation domain. This enables complex wavefront shaping with significantly fewer RF chains, thereby reducing hardware complexity and power consumption \cite{an2023stacked,an2024stacked_WC}. Compared to conventional single-layer designs, the layered structure provides enhanced degrees of freedom for controlling wave propagation \cite{papazafeiropoulos2025ergodic_outage_3}, making SIM-assisted UAV systems a promising architecture for next-generation aerial networks.

Despite its potential, SIM-assisted UAV communication has received very limited attention \cite{fan2025joint,zarini2025orchestration}. In \cite{fan2025joint}, an uplink SIM-assisted UAV system is studied, jointly optimizing user--UAV association, UAV locations, and SIM phase shifts; however, the architecture relies only on analog SIM processing, and its formulation does not extend to the hybrid downlink setting, where digital precoding and the cascaded SIM response are coupled through the effective multi-user channel. In \cite{zarini2025orchestration}, the SIM response, UAV position, and transmit power are optimized via meta-learning-based deep reinforcement learning; digital beamforming is again omitted, and the data-driven solution provides no analytical treatment of the hybrid multi-layer design. UAV-mounted SIMs have also been exploited for wireless sensing in \cite{lin2025uav}, where the SIM acts as a diffractive neural network for low-complexity direction-of-arrival estimation; while confirming the feasibility of airborne wave-domain processing, this work does not address multi-user downlink beamforming or UAV placement. Consequently, analytically tractable joint frameworks for SIM-assisted hybrid UAV system remain unexplored.

To address this gap, we develop a unified framework for SIM-assisted multi-user downlink UAV communications, where the SIM performs analog wave-domain processing while low-dimensional digital beamforming is carried out at baseband. We formulate a sum-rate maximization problem that jointly optimizes the digital beamformer, the SIM phase shifts, and the 3D UAV position, and solve it via a structure-exploiting alternating optimization algorithm. Specifically, the digital beamformer is obtained in closed form via zero-forcing (ZF) precoding, the SIM coefficients are optimized using block coordinate descent (BCD) with closed-form element-wise updates based on a quadratic surrogate, and the UAV position is refined through projected gradient ascent (PGA) using analytically derived rate gradients. 
The proposed algorithm guarantees monotonic improvement of the objective, and its limit points satisfy blockwise first-order optimality conditions.
Numerical results demonstrate significant performance gains of SIM-assisted UAV systems compared to conventional single-layer designs under various system configurations, including transmit power, number of SIM layers, and metasurface size per layer\footnote{\emph{Notations:} Scalars, vectors, and matrices are denoted by italic, bold lowercase, and bold uppercase letters, respectively. $(\cdot)^T$, $(\cdot)^H$, and $(\cdot)^*$ denote transpose, Hermitian transpose, and complex conjugate, respectively. $\|\cdot\|$ denotes the Euclidean norm and $\|\cdot\|_F$ the Frobenius norm. $\mathrm{diag}(\cdot)$ denotes a diagonal matrix formed from a vector, while $\Re\{\cdot\}$ denotes the real part of a complex number. $[\mathbf{x}]_m$ denotes the $m$-th element of vector $\mathbf{x}$}.
 \begin{figure}
    \centering
    \includegraphics[width=0.5\linewidth,height=4cm]{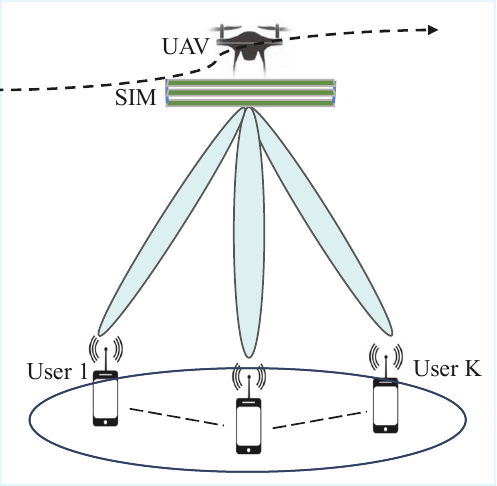}
    \caption{SIM-assisted UAV communications.}
    \label{fig:placeholder}
\end{figure}
 
\section{System Model and Problem Formulation}
\subsection{System Model}
We consider the downlink of a SIM mounted UAV multi-user communication system serving $K$ single-antenna ground users. The transmitter employs hybrid beamforming consisting of digital precoding and a SIM implementing analog wave-domain beamforming. The number of RF chains is $N=K$, enabling a low hardware complexity design. The SIM comprises $L$ metasurface layers indexed by $\mathcal{L}=\{1,\ldots,L\}$, fed by the $N$ RF chains. Each layer contains $M$ meta-atoms arranged in a uniform planar array satisfying $M\ge N$, with index set $\mathcal{M}=\{1,\ldots,M\}$. The transmission coefficient imposed by the $m$-th meta-atom on the $l$-th metasurface layer is
$
\phi_{l,m}=e^{j\theta_{l,m}}, \qquad \theta_{l,m}\in[0,2\pi),
 $
where $\theta_{l,m}$ denotes the adjustable phase shift. Let
 $
\boldsymbol{\phi}_l=
[\phi_{l,1},\phi_{l,2},\ldots,\phi_{l,M}]^T
 $
and
 $
\mathbf{\Phi}_l=\text{diag}(\boldsymbol{\phi}_l)\in\mathbb{C}^{M\times M}
 $
represent the transmission coefficient vector and matrix of the $l$-th SIM layer.

All metasurfaces follow an isomorphic lattice structure and are modelled as uniform planar arrays \cite{an2023stacked}. The distance between the $m$-th and $\tilde m$-th meta-atoms on the same transmit metasurface is $
r_{m,\tilde m}= \Delta \sqrt{(m_z-\tilde m_z)^2+(m_x-\tilde m_x)^2}
$ where $\Delta$ denotes the inter-element spacing. The indices of the $m$-th meta-atom are defined as $ 
m_z=\left\lceil\frac{m}{m_{\max}}\right\rceil$, with $
m_x=\text{mod}(m-1,m_{\max})+1 $
with $M=m_{\max}^2$. All metasurface layers are assumed parallel with uniform spacing. Let $D_t$ denote the thickness of the SIM and $d_t=D_t/(L-1)$ the spacing between adjacent metasurfaces. The transmission distance from the $\tilde m$-th meta-atom on layer $l-1$ to the $m$-th meta-atom on layer $l$ is $ 
r_{m,\tilde m}^{(l)}=\sqrt{r_{m,\tilde m}^2+d_t^2}, $ with $l\in\mathcal{L}\setminus\{1\}$.  Based on Rayleigh-Sommerfeld diffraction theory, the propagation coefficient between these two meta-atoms is \cite{an2023stacked}:
\begin{equation}
w_{m,\tilde m}^{l}=
\frac{A_t\cos\chi_{m,\tilde m}^{l}}{r_{m,\tilde m}^{(l)}}
\left(
\frac{1}{2\pi r_{m,\tilde m}^{(l)}}
-\frac{j}{\lambda}
\right)
e^{j\frac{2\pi}{\lambda}r_{m,\tilde m}^{(l)}}
\end{equation}
where $A_t$ denotes the area of each meta-atom, $\lambda$ the wavelength, and $\chi_{m,\tilde m}^{l}$ the angle between the propagation direction and the metasurface normal. 
Let $\mathbf{W}_l\in\mathbb{C}^{M\times M}$, containing the entries $w_{m,\tilde m}^{l}$, denote the transmission coefficient matrix between the $(l-1)$-th and $l$-th metasurface layers, for $l\in\mathcal{L}\setminus\{1\}$. Analogously, $\mathbf{W}_1\in\mathbb{C}^{M\times N}$ denotes the propagation matrix from the $N$ RF feeds to the first layer, whose entries follow the same Rayleigh--Sommerfeld model with the corresponding feed--atom distances. Hence, since $\mathbf{\Phi}_1\mathbf{W}_1\in\mathbb{C}^{M\times N}$ and $\mathbf{\Phi}_l\mathbf{W}_l\in\mathbb{C}^{M\times M}$ for $l>1$, the equivalent SIM analog precoder is
$\boldsymbol{\Psi}=\mathbf{\Phi}_L\mathbf{W}_L\cdots\mathbf{\Phi}_1\mathbf{W}_1\in\mathbb{C}^{M\times N}$.
Let $\mathbf{s}\in\mathbb{C}^{K\times1}$ denote the transmitted symbol vector satisfying $\mathbb{E}[\mathbf{s}\mathbf{s}^H]=\mathbf{I}$. Digital beamforming is performed through $
\mathbf{V}\in\mathbb{C}^{N\times K}$. 
Thus, the transmitted signal after SIM processing is $
\mathbf{x} =\boldsymbol{\Psi}\mathbf{V}\mathbf{s}$.
Let $\mathbf{h}_k\in\mathbb{C}^{M\times1}$ denote the channel between the SIM output aperture and user $k$. Let $\mathbf{q} =[x_u,y_u,z_u]^T$ denote the 3D coordinates of the UAV and $\mathbf{u}_k =[x_k,y_k,0]^T$   denote the position of user $k$. The channel between the SIM output aperture and user $k$ is modeled by considering the propagation from each meta-atom of the last SIM layer to the user. Let $\mathbf{p}_m=[x_m,y_m,z_m]^T$ denote the position of the $m$-th meta-atom on the last SIM layer. The propagation distance between this meta-atom and user $k$ is given by:
\begin{equation}
d_{k,m}(\mathbf{q}) = \| \mathbf{q}+\mathbf{p}_m-\mathbf{u}_k \|_2 .
\end{equation}
Assuming line-of-sight (LoS) propagation, the channel coefficient between the $m$-th meta-atom and user $k$ is modeled as:
\begin{equation} \label{canale}
h_{k,m}(\mathbf{q})=
\sqrt{\beta_0}\, d_{k,m}(\mathbf{q})^{-\frac{\alpha}{2}}
e^{-j\frac{2\pi}{\lambda} d_{k,m}(\mathbf{q})},
\end{equation}
where $\beta_0$ is the channel gain at a reference distance of $1$ m, $\alpha$ denotes the path-loss exponent, and $\lambda$ is the carrier wavelength. Thus, the channel vector between the SIM output layer and user $k$ can be written as:
$
\mathbf{h}_k(\mathbf{q})=
\left[
h_{k,1}(\mathbf{q}),\,
h_{k,2}(\mathbf{q}),\,
\ldots,\,
h_{k,M}(\mathbf{q})
\right]^T
\in\mathbb{C}^{M\times1}
 $.
The received signal at user $k$ is therefore:
\begin{equation} \label{received_signal}
y_k=\mathbf{h}_k^H(\mathbf{q})\boldsymbol{\Psi}\mathbf{V}\mathbf{s}+n_k
\end{equation}
where $n_k\sim\mathcal{CN}(0,\sigma_k^2)$ is the additive white Gaussian noise. The resulting SINR of user $k$ is given by
\begin{equation}
\gamma_k(\bmP,\bmV,\mathbf{q})=
\frac{|\mathbf{h}_k^H(\mathbf{q})\boldsymbol{\Psi}\mathbf{v}_k|^2}
{\sum_{i\neq k}|\mathbf{h}_k^H(\mathbf{q})\boldsymbol{\Psi}\mathbf{v}_i|^2+\sigma_k^2}
\end{equation}
where $\mathbf{v}_k$ denotes the $k$-th column of $\mathbf{V}$.

\subsection{Problem Formulation}
Based on the above system model, the objective is to jointly optimize the digital beamformer, the SIM transmission coefficients, and the UAV position to maximize the system sum rate. The problem can be formulated as
\begin{subequations}\label{opt_problem}
\begin{align}
\max_{\substack{\mathbf{V},\,\{\mathbf{\Phi}_l\}, \mathbf{q}}} \quad & 
\sum_{k=1}^{K}
\log_2\!\Big(
1+ \gamma_k(\{\mathbf{\Phi}_l\},\bmV,\mathbf{q})
\Big)
\label{opt_problem_obj}
\\
\text{s.t.}\quad
& 
\text{Tr}\Big( \boldsymbol{\Psi} \mathbf{V} \mathbf{V}^H \boldsymbol{\Psi}^H  \Big)
\le P_{\max},
\label{opt_problem_power}
\\
& \theta_{\ell,m}\in[0,2\pi), \quad  \quad \forall \ell,m,
\label{opt_problem_phase}
\\
& j_{\min}\le j_u\le j_{\max}, \quad \text{where}\; j \in \{x,y,z\}.
\label{opt_problem_x}
\end{align}
\end{subequations}
In \eqref{opt_problem}, constraint \eqref{opt_problem_power} ensures that the total transmit power after hybrid beamforming does not exceed the maximum available transmit power $P_{\max}$. Constraints \eqref{opt_problem_phase}  represent the phase of the SIM elements, and constraint \eqref{opt_problem_x} restrict the UAV 3D deployment region by bounding its horizontal coordinates and altitude.

\section{Proposed Optimization Framework}

Problem~\eqref{opt_problem} is highly non-convex due to the coupling among the digital beamformer, the cascaded SIM structure, the nonlinear dependence on the UAV position, and the unit-modulus constraints. We address it via alternating optimization with closed-form ZF and BCD-based SIM updates, followed by PGA refinement of the UAV position.

\subsection{Digital Beamformer Design}
For fixed SIM coefficients $\{\mathbf{\Phi}_\ell\}$ and UAV position $\mathbf{q}$, define the effective channel for user $k$ as $\bar{\mathbf{h}}_k^H(\mathbf{q})=\mathbf{h}_k^H(\mathbf{q})\boldsymbol{\Psi}$ and stack the effective user channels into
$ \bar{\mathbf{H}}(\mathbf{q}) = [\bar{\mathbf{h}}_1^H(\mathbf{q}),\cdots,\bar{\mathbf{h}}_K^H(\mathbf{q})]^T$.
To suppress inter-user interference, we employ a ZF digital beamformer based on the Moore–Penrose pseudoinverse
\begin{equation}\label{digi_solu}
\mathbf{V}
=
\eta\,\bar{\mathbf{H}}^{\dagger}(\mathbf{q})
\end{equation}
where $(\cdot)^{\dagger}$ denotes the pseudoinverse. The normalization factor $\eta$ guarantees the transmit power constraint and is given by $
\eta
=
\sqrt{
\frac{P_{\max}}
{
\left\|\boldsymbol{\Psi}
\bar{\mathbf{H}}^{\dagger}(\mathbf{q}) 
\right\|_F^2
}
}$. 
This closed-form solution nulls the inter-user interference for given $\boldsymbol{\Psi}$ and $\mathbf{q}$; hence, it must be recomputed whenever the UAV moves or the SIM response is updated.

 \subsection{SIM Optimization}

For a fixed UAV position $\mathbf{q}$ and digital beamformer $\mathbf{V}$, the sum-rate maximization with respect to the SIM coefficients remains non-convex due to the cascaded structure of $\boldsymbol{\Psi}$. To exploit the layered architecture of the SIM, we adopt a BCD approach and optimize one layer at a time. Specifically, for the $\ell$-th layer, define
$ \mathbf{A}_\ell \triangleq \mathbf{\Phi}_L\mathbf{W}_L\cdots \mathbf{\Phi}_{\ell+1}\mathbf{W}_{\ell+1}$, and 
$ 
\mathbf{B}_\ell \triangleq \mathbf{W}_\ell \mathbf{\Phi}_{\ell-1}\mathbf{W}_{\ell-1}\cdots \mathbf{\Phi}_1\mathbf{W}_1\mathbf{V},
$
such that $ 
\boldsymbol{\Psi}\mathbf{V}=\mathbf{A}_\ell \mathbf{\Phi}_\ell \mathbf{B}_\ell .
$ 
Accordingly, for user $k$ and stream $i$, the effective scalar channel through layer $\ell$ is
$ 
g_{k,i}^{(\ell)} \triangleq \mathbf{h}_k^H(\mathbf{q})\mathbf{A}_\ell \mathbf{\Phi}_\ell \mathbf{B}_\ell \mathbf{e}_i,
$ 
where $\mathbf{e}_i$ denotes the $i$-th canonical basis vector. To further decouple the optimization, we update one diagonal entry of $\mathbf{\Phi}_\ell$ at a time. Let $\mathbf{\Phi}_\ell=\widetilde{\mathbf{\Phi}}_{\ell,n}+\phi_{\ell,n}\mathbf{E}_n$, where $\widetilde{\mathbf{\Phi}}_{\ell,n}$ contains all diagonal elements except the $n$-th one and $\mathbf{E}_n\triangleq\mathrm{diag}(\mathbf{e}_n)$. Then,
$ 
g_{k,i}^{(\ell)} = a_{k,i}^{(\ell,n)} + \phi_{\ell,n} b_{k,i}^{(\ell,n)},
$ 
where
$ 
a_{k,i}^{(\ell,n)} = \mathbf{h}_k^H(\mathbf{q})\mathbf{A}_\ell \widetilde{\mathbf{\Phi}}_{\ell,n}\mathbf{B}_\ell \mathbf{e}_i,\;
b_{k,i}^{(\ell,n)} = \mathbf{h}_k^H(\mathbf{q})\mathbf{A}_\ell \mathbf{E}_n \mathbf{B}_\ell \mathbf{e}_i$.
Accordingly, for fixed $\{\phi_{\ell,m}\}_{m\neq n}$, the SINR becomes
\begin{equation}
\gamma_k(\phi_{\ell,n})
=
\frac{|a_{k,k}^{(\ell,n)}+\phi_{\ell,n}b_{k,k}^{(\ell,n)}|^2}
{\sum_{i\neq k}|a_{k,i}^{(\ell,n)}+\phi_{\ell,n}b_{k,i}^{(\ell,n)}|^2+\sigma_k^2}.
\end{equation}
The resulting scalar optimization problem for $\phi_{\ell,n}$ is
\begin{equation}\label{eq:phi_scalar_prob}
\max_{|\phi_{\ell,n}|=1}
R_{\ell,n}(\phi_{\ell,n})
=
\sum_{k=1}^{K}\log_2(1+\gamma_k(\phi_{\ell,n})).
\end{equation}
 Let $S_k=|a_{k,k}^{(\ell,n)}+\phi_{\ell,n}b_{k,k}^{(\ell,n)}|^2$ and 
$I_k=\sum_{i\neq k}|a_{k,i}^{(\ell,n)}+\phi_{\ell,n}b_{k,i}^{(\ell,n)}|^2+\sigma_k^2$. Then
\begin{equation}
R_{\ell,n}=\sum_{k=1}^{K}\log_2\!\left(1+\frac{S_k}{I_k}\right)
=\sum_{k=1}^{K}\Big[\log_2(S_k+I_k)-\log_2(I_k)\Big].
\end{equation}
Using Wirtinger calculus, the derivative of a term $|\xi|^2$ with $\xi=a+\phi_{\ell,n}b$ satisfies
$\frac{\partial |\xi|^2}{\partial \phi_{\ell,n}^*}=\xi\, b^*$, and applying the chain rule yields
\begin{equation}
\frac{\partial (S_k+I_k)}{\partial \phi_{\ell,n}^*}
=
\sum_{i=1}^K \xi_{k,i} b_{k,i}^{(\ell,n)*},\quad
\frac{\partial I_k}{\partial \phi_{\ell,n}^*}
=
\sum_{i\neq k} \xi_{k,i} b_{k,i}^{(\ell,n)*}.
\end{equation}
Therefore, the gradient of $R_{\ell,n}$ with respect to $\phi_{\ell,n}^*$ is
\begin{equation}
\frac{\partial R_{\ell,n}}{\partial\phi_{\ell,n}^*}
=
\frac{1}{\ln2}
\sum_{k=1}^K
\left[
\frac{\sum_{i=1}^K \xi_{k,i} b_{k,i}^{(\ell,n)*}}{S_k+I_k}
-
\frac{\sum_{i\neq k}\xi_{k,i} b_{k,i}^{(\ell,n)*}}{I_k}
\right],
\end{equation}
where $\xi_{k,i}=a_{k,i}^{(\ell,n)}+\phi_{\ell,n}b_{k,i}^{(\ell,n)}$. Let $\phi_{\ell,n}^{(t)}$ denote the current iterate and define
\begin{equation}
g_{\ell,n}^{(t)} \triangleq 
\left.\frac{\partial R_{\ell,n}}{\partial \phi_{\ell,n}^*}\right|_{\phi_{\ell,n}=\phi_{\ell,n}^{(t)}} .
\end{equation}
Since $R_{\ell,n}$ is continuously differentiable, a quadratic minorizing surrogate can be constructed at $\phi_{\ell,n}^{(t)}$ under a suitable local Lipschitz bound on the gradient \cite{beck2017first}:
\begin{equation} \label{surrogate}
\widetilde R(\phi_{\ell,n}\mid \phi_{\ell,n}^{(t)})
=
R_{\ell,n}(\phi_{\ell,n}^{(t)})
+
2\Re\{g_{\ell,n}^{(t)*}(\phi_{\ell,n}-\phi_{\ell,n}^{(t)})\}
-\rho_{\ell,n}^{(t)}|\phi_{\ell,n}-\phi_{\ell,n}^{(t)}|^2,
\end{equation}
where $\rho_{\ell,n}^{(t)}>0$ is selected via backtracking\footnote{Starting from $\rho_{\ell,n}^{(t)}=\rho_0>0$, the candidate update \eqref{closed_form_SIM} is computed and accepted if $R_{\ell,n}(\phi_{\ell,n}^{(t+1)})\geq \widetilde R(\phi_{\ell,n}^{(t+1)}\mid\phi_{\ell,n}^{(t)})$; otherwise $\rho_{\ell,n}^{(t)}\leftarrow\tau\rho_{\ell,n}^{(t)}$, with $\tau>1$, and the test is repeated. Since $R_{\ell,n}$ is continuously differentiable with locally Lipschitz gradient on the compact set $|\phi_{\ell,n}|=1$, the test is satisfied after a finite number of trials, namely once $\rho_{\ell,n}^{(t)}$ exceeds the local Lipschitz constant of the gradient \cite{beck2017first}.}
so that $R_{\ell,n}(\phi_{\ell,n})\ge \widetilde R(\phi_{\ell,n}\mid \phi_{\ell,n}^{(t)})$, with equality at $\phi_{\ell,n}^{(t)}$. Thus, $\widetilde R$ is a valid minorizing surrogate, and maximizing it yields standard minorization-maximization (MM)-type updates \cite{razaviyayn2013unified,scutari2016parallel}.

\begin{theorem}
Maximizing $\widetilde R(\phi_{\ell,n}\mid \phi_{\ell,n}^{(t)})$ yields the update
\begin{equation} \label{closed_form_SIM}
\phi_{\ell,n}^{(t+1)}
=
e^{j\angle c_{\ell,n}^{(t)}},
\qquad
c_{\ell,n}^{(t)}=g_{\ell,n}^{(t)}+\rho_{\ell,n}^{(t)}\phi_{\ell,n}^{(t)} .
\end{equation}
Moreover,
$ 
R_{\ell,n}(\phi_{\ell,n}^{(t+1)})\ge R_{\ell,n}(\phi_{\ell,n}^{(t)})$. 
\end{theorem}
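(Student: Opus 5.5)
The first step is to expand the surrogate \eqref{surrogate} on the unit circle. Since $|\phi_{\ell,n}|=|\phi_{\ell,n}^{(t)}|=1$, we have
\begin{equation*}
|\phi_{\ell,n}-\phi_{\ell,n}^{(t)}|^2 = 2-2\Re\{\phi_{\ell,n}^{(t)*}\phi_{\ell,n}\}.
\end{equation*}
Collecting all terms that do not depend on $\phi_{\ell,n}$ into a constant $C^{(t)}$, this gives
\begin{equation*}
\widetilde R(\phi_{\ell,n}\mid\phi_{\ell,n}^{(t)}) = C^{(t)} + 2\Re\big\{(g_{\ell,n}^{(t)}+\rho_{\ell,n}^{(t)}\phi_{\ell,n}^{(t)})^*\phi_{\ell,n}\big\} = C^{(t)}+2\Re\{c_{\ell,n}^{(t)*}\phi_{\ell,n}\}.
\end{equation*}
In deriving this I use that $\rho_{\ell,n}^{(t)}$ is real and positive. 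Maximizing the surrogate over $|\phi_{\ell,n}|=1$ therefore reduces to maximizing the linear functional $\Re\{c^*\phi\}$ on the unit circle.

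The second step is to solve this reduced problem. By Cauchy--Schwarz (equivalently, the triangle inequality), $\Re\{c^*\phi\}\le |c||\phi|=|c|$. Equality holds at $\phi=e^{j\angle c}$, which gives \eqref{closed_form_SIM}. When $c_{\ell,n}^{(t)}=0$ every feasible point is optimal, so I will adopt the convention $\angle 0=\angle\phi_{\ell,n}^{(t)}$, i.e., keep the current iterate. This keeps the update well defined and preserves the monotonicity argument.

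The third step is monotonicity, via the standard MM chain. The backtracking rule in the footnote ensures that the accepted $\rho_{\ell,n}^{(t)}$ satisfies $R_{\ell,n}(\phi_{\ell,n}^{(t+1)})\ge \widetilde R(\phi_{\ell,n}^{(t+1)}\mid\phi_{\ell,n}^{(t)})$. Since $\phi_{\ell,n}^{(t+1)}$ maximizes the surrogate over the feasible set, which contains $\phi_{\ell,n}^{(t)}$, we get $\widetilde R(\phi_{\ell,n}^{(t+1)}\mid\phi_{\ell,n}^{(t)})\ge \widetilde R(\phi_{\ell,n}^{(t)}\mid\phi_{\ell,n}^{(t)})$. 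Evaluating \eqref{surrogate} at $\phi_{\ell,n}^{(t)}$ shows the right-hand side equals $R_{\ell,n}(\phi_{\ell,n}^{(t)})$. Chaining these three relations gives the claimed inequality.

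The main obstacle is making the minorization step rigorous. The expression \eqref{surrogate} is a lower bound only for $\rho$ at least half the local Lipschitz constant of the real gradient. The sign and scaling conventions of the Wirtinger derivative also need care: with $\partial/\partial\phi^*$, the real directional derivative is $2\Re\{g^*\Delta\}$. So I would verify the linear term against a first-order Taylor expansion, $R(\phi^{(t)}+\Delta)=R(\phi^{(t)})+2\Re\{g^*\Delta\}+O(|\Delta|^2)$. Rather than proving a global bound, I would rely on the backtracking acceptance test, which only requires the inequality at the single accepted point. Its finite termination follows from the $O(|\Delta|^2)$ remainder being bounded on the compact unit circle, since $R_{\ell,n}$ is smooth there because $I_k\ge\sigma_k^2>0$.
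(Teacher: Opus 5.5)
Your proposal is correct and follows the paper's proof essentially step for step: you expand $|\phi_{\ell,n}-\phi_{\ell,n}^{(t)}|^2$ on the unit circle to reduce the surrogate to $2\Re\{c_{\ell,n}^{(t)*}\phi_{\ell,n}\}$ plus a constant, solve by phase alignment, and close with the MM chain $R_{\ell,n}(\phi_{\ell,n}^{(t+1)})\ge\widetilde R(\phi_{\ell,n}^{(t+1)}\mid\phi_{\ell,n}^{(t)})\ge\widetilde R(\phi_{\ell,n}^{(t)}\mid\phi_{\ell,n}^{(t)})=R_{\ell,n}(\phi_{\ell,n}^{(t)})$. Your departures are minor tightenings of the same argument: you invoke the backtracking acceptance test only at the accepted point, rather than the paper's appeal to a global lower bound, which is slightly more faithful to what the footnote's backtracking actually guarantees; and you fix a convention for the case $c_{\ell,n}^{(t)}=0$.
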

 \begin{proof}
Starting from the surrogate in \eqref{surrogate}, we maximize $\widetilde R(\phi_{\ell,n}\mid \phi_{\ell,n}^{(t)})$ over $|\phi_{\ell,n}|=1$. Discarding terms independent of $\phi_{\ell,n}$ and expanding $|\phi_{\ell,n}-\phi_{\ell,n}^{(t)}|^2$, while using $|\phi_{\ell,n}|=|\phi_{\ell,n}^{(t)}|=1$, the surrogate reduces to
\begin{equation}
\widetilde R(\phi_{\ell,n}\mid \phi_{\ell,n}^{(t)})
\equiv
2\Re\{\phi_{\ell,n}^* c_{\ell,n}^{(t)}\}+\text{const},
\end{equation}
where $c_{\ell,n}^{(t)}=g_{\ell,n}^{(t)}+\rho_{\ell,n}^{(t)}\phi_{\ell,n}^{(t)}$. Hence, we have $
\max_{|\phi_{\ell,n}|=1}\Re\{\phi_{\ell,n}^* c_{\ell,n}^{(t)}\}$, which can be optimally 
solved by phase alignment, yielding $\phi_{\ell,n}^{(t+1)}=e^{j\angle c_{\ell,n}^{(t)}}$. Since $\widetilde R$ is a global lower bound of $R_{\ell,n}$ and is tight at $\phi_{\ell,n}^{(t)}$, monotonic improvement follows \cite{razaviyayn2013unified}.
\end{proof}
Algorithm~1 summarizes the SIM design: given the current $\mathbf{V}$ and $\mathbf{q}$, it sweeps across the $L$ layers, updating each phase shift in closed form via \eqref{closed_form_SIM} with backtracked $\rho_{\ell,n}$, until the sum-rate improvement falls below $\varepsilon_{\rm sim}$; every update is non-decreasing by Theorem~1.
\begin{algorithm}[t]
\caption{SIM Optimization}
\begin{algorithmic}[1]
\Require $\mathbf{q}$, $\mathbf{V}$, $\{\mathbf{W}_\ell\}_{\ell=1}^{L}$, initial $\{\mathbf{\Phi}_\ell\}_{\ell=1}^{L}$, tolerance $\varepsilon_{\rm sim}$
\State Construct the current $\boldsymbol{\Psi}$ from $\{\mathbf{\Phi}_\ell\}_{\ell=1}^{L}$ and compute the initial sum rate $R_{\rm sim}^{(0)}$
\State Set the inner iteration index $r=0$
\Repeat
\For{$\ell=1,\ldots,L$}
\State Compute $\mathbf{A}_\ell=\mathbf{\Phi}_L\mathbf{W}_L\cdots\mathbf{\Phi}_{\ell+1}\mathbf{W}_{\ell+1}$
\State Compute $\mathbf{B}_\ell=\mathbf{W}_\ell\mathbf{\Phi}_{\ell-1}\mathbf{W}_{\ell-1}\cdots\mathbf{\Phi}_1\mathbf{W}_1\mathbf{V}$
\For{$n=1,\ldots,M$}
\State Compute $g_{\ell,n}^{(r)}$ and $\rho_{\ell,n}^{(r)}>0$ by backtracking
\State Form $c_{\ell,n}^{(r)}=g_{\ell,n}^{(r)}+\rho_{\ell,n}^{(r)}\phi_{\ell,n}^{(r)}$
\State Update $\phi_{\ell,n}^{(r+1)}=e^{j\angle c_{\ell,n}^{(r)}}$
\EndFor
\State Set $\mathbf{\Phi}_\ell=\mathrm{diag}([\phi_{\ell,1}^{(r+1)},\ldots,\phi_{\ell,M}^{(r+1)}]^T)$
\EndFor
\State Reconstruct $\boldsymbol{\Psi}$ from $\{\mathbf{\Phi}_\ell\}_{\ell=1}^{L}$ and compute $R_{\rm sim}^{(r+1)}$
\State $r\leftarrow r+1$
\Until{$|R_{\rm sim}^{(r)}-R_{\rm sim}^{(r-1)}|\le \varepsilon_{\rm sim}$}
\State \Return $\{\mathbf{\Phi}_\ell\}_{\ell=1}^{L}$
\end{algorithmic}
\end{algorithm}

\subsection{UAV 3D Position Optimization}
For fixed SIM coefficients and digital beamformer, define the effective transmit beamforming vectors $\mathbf{f}_i \triangleq \boldsymbol{\Psi}\mathbf{v}_{i}$, $i=1,\ldots,K$. Accordingly, the received signal component from stream $i$ at user $k$ is
\begin{equation}
z_{k,i}(\mathbf{q})=\mathbf{h}_k^H(\mathbf{q})\mathbf{f}_i
=\sum_{m=1}^{M} h_{k,m}^*(\mathbf{q})[\mathbf{f}_i]_m .
\end{equation}

Hence, the SINR at user $k$ becomes $\gamma_k(\mathbf{q})=\frac{|z_{k,k}(\mathbf{q})|^2}{\sum_{i\neq k}|z_{k,i}(\mathbf{q})|^2+\sigma_k^2}$.

The UAV-position optimization subproblem is
\begin{equation}\label{eq:q_rate_problem}
\max_{\mathbf{q}}\quad 
R(\mathbf{q})=\sum_{k=1}^K \log_2\!\big(1+\gamma_k(\mathbf{q})\big)\quad
\;\text{s.t.}\;\eqref{opt_problem_x}.
\end{equation}
Define $S_k(\mathbf{q})=|z_{k,k}(\mathbf{q})|^2$ and $I_k(\mathbf{q})=\sum_{i\neq k}|z_{k,i}(\mathbf{q})|^2+\sigma_k^2$, so that $\gamma_k=S_k/I_k$. Using $\nabla \log(1+x)=\frac{\nabla x}{1+x}$, the gradient of the sum rate is
\begin{equation}\label{eq:grad_rate_q}
\nabla_{\mathbf{q}}R(\mathbf{q})
=
\frac{1}{\ln 2}
\sum_{k=1}^{K}
\frac{
I_k\nabla_{\mathbf{q}}S_k
-
S_k\nabla_{\mathbf{q}}I_k
}{
I_k(S_k+I_k)
}.
\end{equation}
Using the chain rule,
\begin{align}
\nabla_{\mathbf{q}}S_k
&=
2\Re\!\left\{
z_{k,k}^*\nabla_{\mathbf{q}}z_{k,k}
\right\},
\nabla_{\mathbf{q}}I_k
=
\sum_{i\neq k}
2\Re\!\left\{
z_{k,i}^*\nabla_{\mathbf{q}}z_{k,i}
\right\},
\end{align}
where
\begin{equation}
\nabla_{\mathbf{q}}z_{k,i}
=
\sum_{m=1}^{M}[\mathbf{f}_i]_m
\nabla_{\mathbf{q}}h_{k,m}^*(\mathbf{q}).
\end{equation}
From \eqref{canale}, since $d_{k,m}(\mathbf{q})=\|\mathbf{q}+\mathbf{p}_m-\mathbf{u}_k\|$, we have
\begin{equation}
\nabla_{\mathbf{q}}h_{k,m}^*(\mathbf{q})
=
h_{k,m}^*(\mathbf{q})
\left(
-\frac{\alpha}{2d_{k,m}(\mathbf{q})}
+
j\frac{2\pi}{\lambda}
\right)
\frac{\mathbf{q}+\mathbf{p}_m-\mathbf{u}_k}{d_{k,m}(\mathbf{q})}.
\end{equation}

Since \eqref{eq:q_rate_problem} is non-concave, we update $\mathbf{q}$ via projected gradient ascent (PGA):
\begin{equation}\label{eq:q_update}
\mathbf{q}^{(t+1)}
=
\Pi
\left(
\mathbf{q}^{(t)}
+
\mu_t
\nabla_{\mathbf{q}}R(\mathbf{q}^{(t)})
\right),
\end{equation}
where $\mu_t$ is a suitable step size (e.g., obtained via backtracking) and $\Pi(\cdot)$ denotes projection onto the feasible region.
Algorithm~2 summarizes the overall design: at each outer iteration, the ZF beamformer is recomputed via \eqref{digi_solu}, the SIM layers are optimized by Algorithm~1, and the UAV position is refined via the PGA step \eqref{eq:q_update}, until the sum-rate improvement falls below $\epsilon$ or $T_{\max}$ iterations are reached.

\begin{algorithm}[t]
\caption{Joint Optimization Framework}
\begin{algorithmic}[1]
\Require Initialize feasible $\mathbf{q}^{(0)}$, $\{\mathbf{\Phi}_\ell^{(0)}\}_{\ell=1}^{L}$, tolerance $\epsilon$
\State Set $t=0$ and compute the initial sum rate $R^{(0)}$
\Repeat
\State Update $\mathbf{V}^{(t+1)}$ with \eqref{digi_solu}
\State Update $\{\mathbf{\Phi}_\ell^{(t+1)}\}_{\ell=1}^{L}$ using Algorithm~1 
\State Construct $\boldsymbol{\Psi}^{(t+1)}=\mathbf{\Phi}_L^{(t+1)}\mathbf{W}_L\cdots \mathbf{\Phi}_1^{(t+1)}\mathbf{W}_1$
\State Compute $\nabla_{\mathbf{q}}R(\mathbf{q}^{(t+1)})$ from \eqref{eq:grad_rate_q}
\State Update the UAV position with \eqref{eq:q_update} 
\State $t\leftarrow t+1$
\Until{$|R^{(t)}-R^{(t-1)}|\le \epsilon$ or $t=T_{\max}$}
\State \Return $\mathbf{q}^{(t)}$, $\mathbf{V}^{(t)}$, and $\{\mathbf{\Phi}_\ell^{(t)}\}_{\ell=1}^{L}$
\end{algorithmic}
\end{algorithm}
  \subsection{Convergence}

The proposed framework alternates over the blocks $\mathbf{V}$, $\{\mathbf{\Phi}_\ell\}_{\ell=1}^{L}$, and $\mathbf{q}$, where each block update yields a non-decreasing objective: the MM-based SIM updates by Theorem~1, and the PGA step through the backtracking line search. 
Therefore, the objective sequence $\{R^{(t)}\}$ is non-decreasing and, since the feasible set is compact, upper bounded; hence, by the monotone convergence theorem, it converges to a finite value \cite{beck2017first}. Regarding the iterates, since the UAV subproblem \eqref{eq:q_rate_problem} is non-concave, convergence to a jointly stationary point of \eqref{opt_problem} cannot be claimed in general. Nevertheless, as the rate function is continuously differentiable with Lipschitz gradient over the compact box \eqref{opt_problem_x}, PGA with backtracking ensures that every limit point of the UAV updates satisfies the first-order optimality condition of \eqref{eq:q_rate_problem} \cite{beck2017first}; combined with the tight MM surrogates \cite{razaviyayn2013unified}, the limit points of the generated sequence satisfy blockwise first-order optimality conditions. The convergence behaviour is illustrated in Fig.~\ref{fig:conv}.
 \begin{figure}[t]
 	\centering
 	\subfloat[Convergence behaviour with $P_{\max}=15$ dBm.%
 	\label{fig:conv}]{%
 		\includegraphics[width=0.48\linewidth]{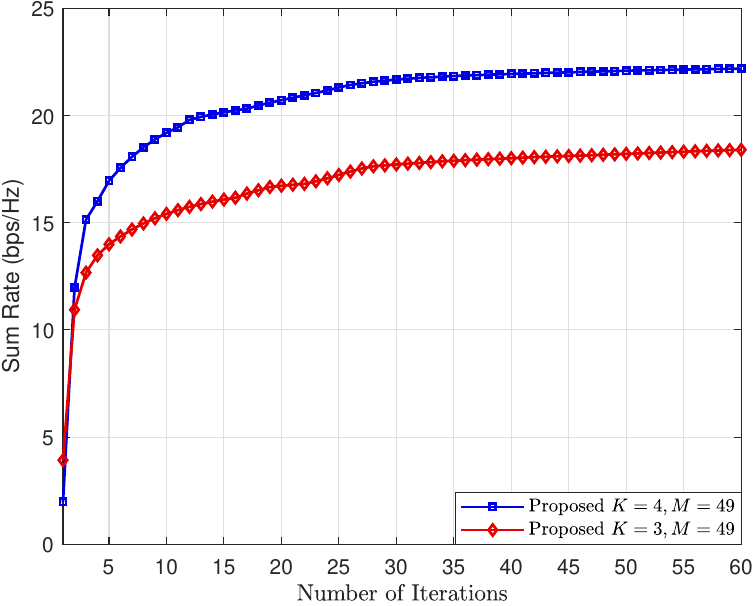}
 	}
 	\hfill
 	\subfloat[Sum rate versus $P_{\max}$ with $L=3$.%
 	\label{fig:pow}]{%
 		\includegraphics[width=0.48\linewidth]{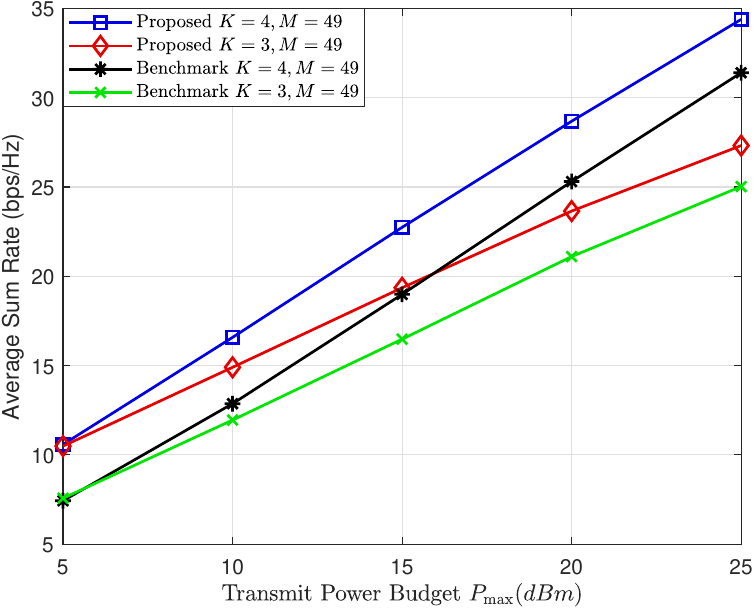}
 	}
 	\caption{Performance of the SIM-assisted UAV system: 
 		(a) convergence behaviour and (b) sum rate versus $P_{\max}$.}
 	\label{fig:combined2}
 \end{figure}
We consider a downlink system serving $K=3$ or $4$ single-antenna users randomly distributed over a circle of radius $30$\,m, with $N=K$ RF chains. The UAV moves within the horizontal region $[-30\,\text{m},30\,\text{m}]\times[-30\,\text{m},30\,\text{m}]$ with altitude $H\in[30\,\text{m},70\,\text{m}]$. We set $\rho_0=10^{-3}$ and $\tau=2$ in all simulations. The carrier frequency is set to $f_c=3\,\text{GHz}$ with a distance-dependent path-loss model with exponent $\alpha=2.2$. The noise power at the users is set to $\sigma_k^2=-90\,\text{dBm}$. The distance between the layers is set to $\lambda$ and the phase configuration of the layers is randomly initialized. The UAV is initialized at the centroid of the user locations, while the digital beamformers are initialized using the dominant eigenvectors of the effective initial channel covariance matrix. Results are averaged over multiple independent user realizations. The proposed scheme (\emph{proposed}) is compared with a conventional hybrid beamforming scheme employing a single-layer metasurface at the antenna front-end (\emph{benchmark}).

Fig.~\ref{fig:pow} depicts the sum rate versus the transmit power $P_{\max}$: both schemes benefit from increased power, but the proposed method consistently outperforms the benchmark, with a gap that grows with $P_{\max}$. This is because the additional wave-domain degrees of freedom of the multi-layer SIM yield a better-conditioned effective channel $\bar{\mathbf{H}}$, reducing the power penalty of the ZF inversion and increasing the effective channel gains toward the users. Since this beamforming advantage scales the received SNR multiplicatively, the corresponding sum-rate offset becomes increasingly visible as $P_{\max}$ grows, whereas the single-layer benchmark exploits the extra power less efficiently. Fig.~\ref{fig:L} plots the sum rate versus the number of layers $L$ with $M=49$: increasing $L$ enhances the proposed scheme through richer wave-domain processing, with gains saturating for larger $L$, while the single-layer benchmark remains unaffected. Finally, Fig.~\ref{fig:M} shows that, for $L=3$, the performance increases significantly with $M$ owing to the enhanced array gain and spatial resolution.
These results reveal a design trade-off between stacking depth and aperture size: for fixed $M$ the gain from adding layers saturates (Fig.~\ref{fig:L}), whereas for fixed $L$ the sum rate keeps increasing with $M$ (Fig.~\ref{fig:M}). Hence, under a fixed budget of $LM$ meta-atoms, a moderate $L$ combined with a sufficiently large aperture per layer better exploits the potential of SIM-assisted UAV systems.
 \begin{figure}[t]
 	\centering
 	\subfloat[Sum rate versus $L$ with $P_{\max}=15$ dBm.%
 	\label{fig:L}]{%
 		\includegraphics[width=0.48\linewidth]{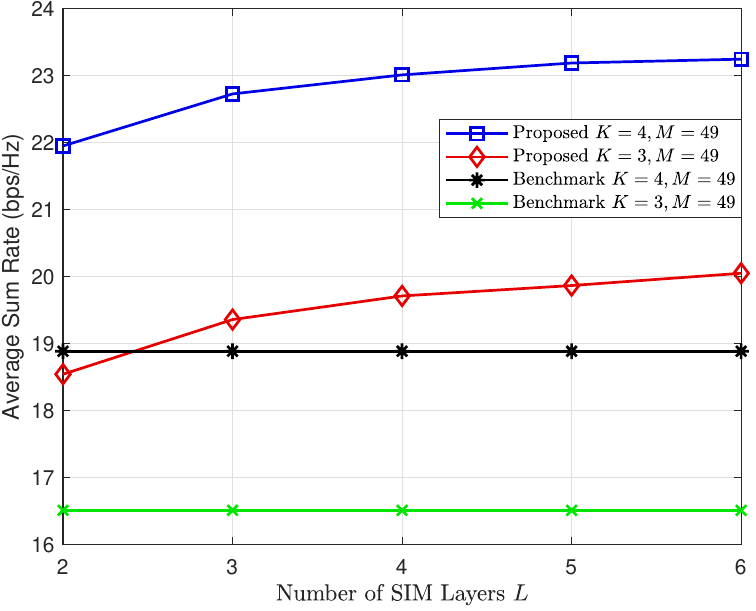}
 	}
 	\hfill
 	\subfloat[Sum rate versus $M$ with $P_{\max}=15$ dBm.%
 	\label{fig:M}]{%
 		\includegraphics[width=0.48\linewidth]{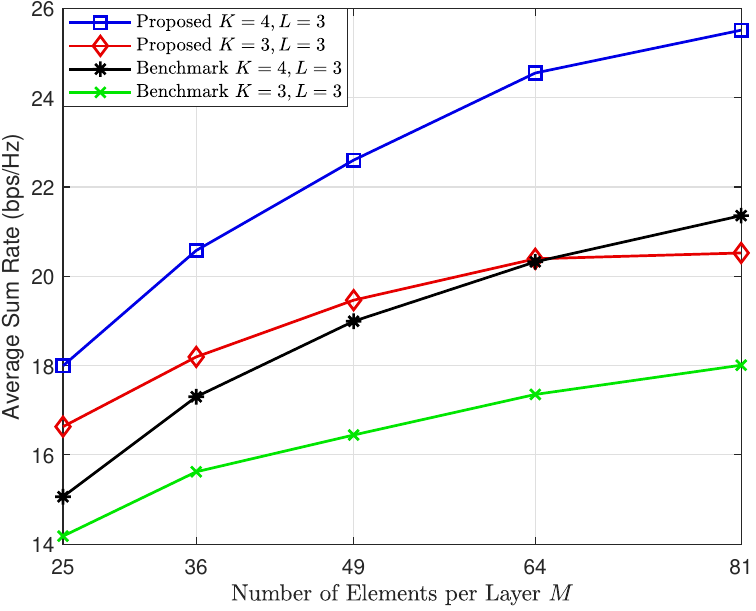}
 	}
 	\caption{Sum-rate performance versus the number of layers $L$ and the number of meta-atoms per layer $M$.}
 	\label{fig:combined}
 \end{figure}
 \section{Conclusion}

In this letter, we investigated SIM-mounted UAV multi-user communications and proposed a joint optimization framework for hybrid beamforming and UAV positioning. By exploiting the problem structure, we developed an efficient alternating algorithm with closed-form updates and gradient-based refinement, with a monotonically non-decreasing objective. Numerical results demonstrated significant performance gains over conventional single-layer RIS architectures, highlighting the benefits of multi-layer wave-domain processing.

\ifCLASSOPTIONcaptionsoff
  \newpage
\fi

{\footnotesize
\bibliographystyle{IEEEtran}
\def\baselinestretch{1}
\bibliography{main}}
\end{document}